\newtheorem{thm}{Theorem}[]
\newtheorem{cor}{Corollary}[]
\theoremstyle{remark}
\newtheorem{rem}[]{Remark}
\theoremstyle{definition}
\begin{document}

\title{Decode-forward and Compute-forward Coding Schemes for the
  Two-Way Relay Channel}
\author{\authorblockN{Peng Zhong and Mai Vu\\}
\authorblockA{Department of Electrical and Computer Engineering\\
McGill University\\
Montreal, QC, Canada H3A 2A7\\
Emails: peng.zhong@mail.mcgill.ca, mai.h.vu@mcgill.ca}}

\maketitle

\begin{abstract}
We consider the full-duplex two-way relay channel with direct link
between two users and propose two coding schemes: a partial
decode-forward scheme, and a combined decode-forward and
compute-forward scheme. Both schemes use rate-splitting and
superposition coding at each user and generate codewords for each node
independently. When applied to the Gaussian channel, partial
decode-forward can strictly increase the rate region over
decode-forward, which is opposite to the one-way relay channel. The
combined scheme uses superposition coding of both Gaussian and lattice
codes to allow the relay to decode the Gaussian parts and compute
the lattice parts. This scheme can also achieve new rates and outperform both decode-forward and
compute-forward separately. These schemes are steps towards
understanding the optimal coding.
\end{abstract}

\IEEEpeerreviewmaketitle

\section{Introduction}\label{sec:intro}
\IEEEPARstart{T}{he} two-way channel in which two users wish to
exchange message was first studied by Shannon \cite{shannon1961two}. A
specific model is the two-way relay channel (TWRC) with a relay located
between two users to help exchange messages. Two types of TWRC exist:
one without a direct link between the two users, a model suitable for
wired communication, and one with the direct link, more suitable for
wireless communication.
In this paper, we focus on the TWRC with direct link between the two
users, also called the full TWRC.

A number of coding schemes have been proposed for the full
TWRC. Different relay strategies, including amplify-and-forward,
decode-forward based on block Markov coding, compress-forward
and a combined decode-forward and compress-forward scheme, are studied
in \cite{rankov2006achievable}. For the decode-forward strategy, the
relay reliably decodes the transmitted messages from both users. It
then re-encodes and forwards. For the compress-forward strategy, the
relay compresses the noisy received signal and forwards. In
\cite{xie2007network}, a decode-forward scheme based on random binning and no
block Markovity was proposed, in which the relay broadcasts the bin
index of the decoded message pair.

A new relaying strategy called compute-forward was recently proposed
in \cite{nazer2009compute}, in which the relay decodes linear functions of
transmitted messages. Nested lattice code \cite{erez2005lattices} is
used to implement compute-forward in Gaussian channels, since it
ensures the sum of two  codewords is still a codeword. Compute-forward
has been shown to outperforms DF in moderate SNR regimes but is worse at low or
high SNR \cite{nazer2009compute}. Compute-forward can be naturally applied in two-way
relay channels as the relay now receives signal containing more than
one message. In \cite{narayanan2007joint}, nested lattice codes were
proposed for the Gaussian separated TWRC with symmetric channel,
i.e. all source and relay nodes have the same transmit powers and
noise variances. For the more general separated AWGN TWRC case,
compute-forward coding with nested lattice code can achievable rate
region within 1/2 bit of the cut-set outer bound
\cite{nam2008capacity}\cite{nam2009capacity}. For the full AWGN TWRC, a
scheme based on compute-forward, list decoding and random binning
technique is proposed in \cite{song2010list}. This scheme achieves rate region
within 1/2 bit of the cut-set bound in some cases.

In this paper, we consider the ideas of decode-forward and
compute-forward together and propose two new coding schemes for the
full TWRC. The first scheme is a partial decode-forward scheme which
extends the decode-forward scheme in \cite{xie2007network}. Each user splits
its message into two parts. The relay decodes one part of message from
each user, re-encode these two parts together and forwards. This
scheme contains the original decode-forward scheme in \cite{xie2007network} as a
special case. Different from the one-way relay channel in which
partial decode-forward brings no improvement on the achievable rate
over decode-forward in Gaussian channels \cite{cover1979capacity}, somewhat surprisingly here
for the full TWRC, partial decode-forward can achieve new rates and strictly increase the
rate region over decode-forward.

The second scheme combines decode-forward scheme with
compute-forward for the full Gaussian TWRC. Each user also splits its
message into two parts, and encodes one part with a Gaussian codeword
and the other with a lattice codeword. The relay chooses to
decode-forward one part of the message from each user, while
compute-forward the other part. This scheme can also achieve new rates and a better rate
region than either decode-forward and compute-forward alone.


\section{Channel Model}\label{sec:system_model}
\subsection{Discrete memoryless TWRC model}
The discrete memoryless two-way relay channel (DM-TWRC) is
denoted by $(\mathcal{X}_1 \times \mathcal{X}_2 \times
\mathcal{X}_r,p(y_1,y_2,y_r|x_1,x_2,x_r),\mathcal{Y}_1 \times
\mathcal{Y}_2 \times \mathcal{Y}_r)$, as in Figure \ref{p2}. Here
$x_1$ and $y_1$ are the input and output signals of user 1; $x_2$ and
$y_2$ are the input and output signals of user 2; $x_r$ and $y_r$ are
the input and output signals of the relay. We consider a full-duplex
channel in which all nodes can transmit and receive at the same time.

\begin{figure}[t]
\centering
  \includegraphics[scale=0.7]{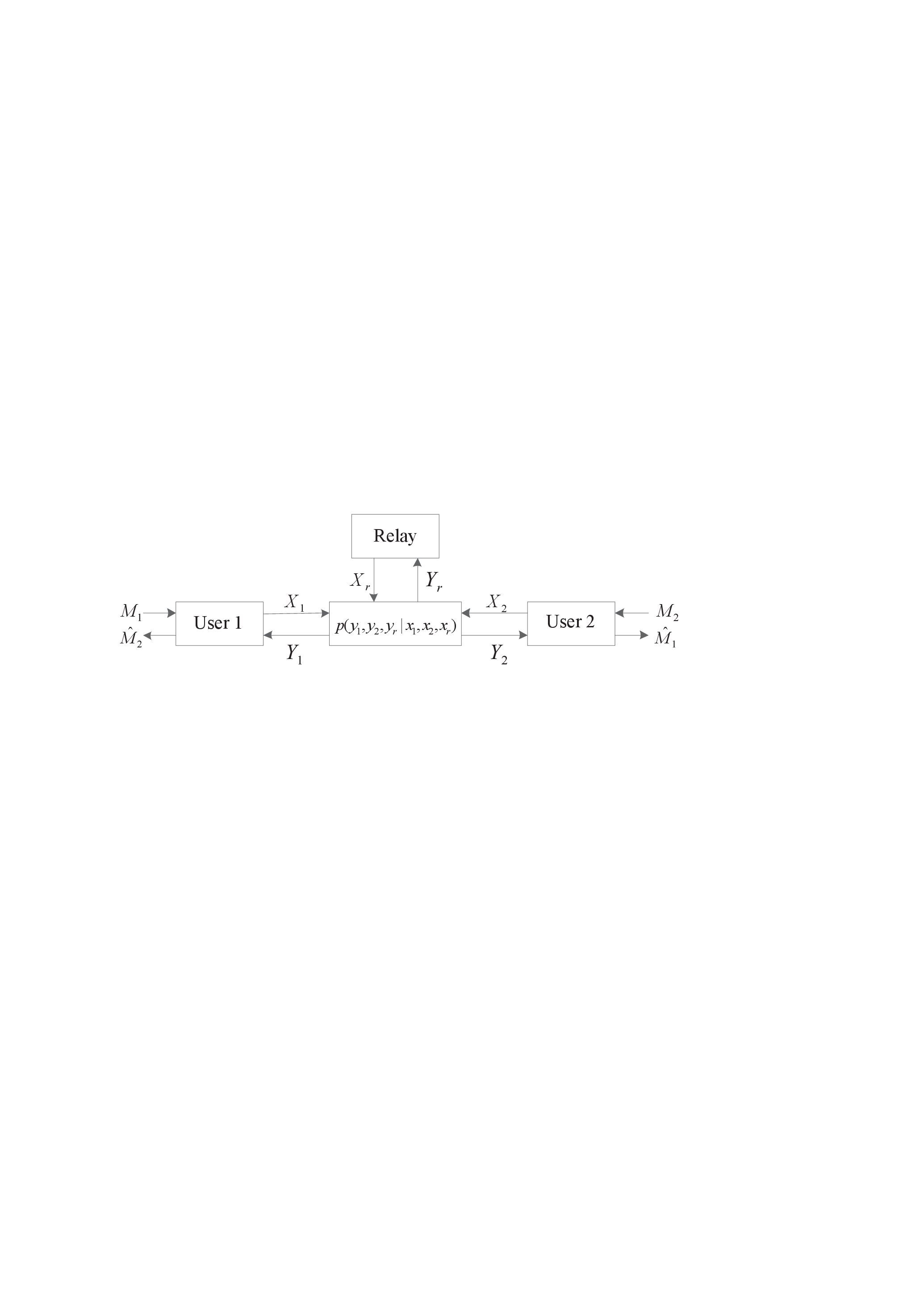}\\
  \caption{Two-way relay channel model}\label{p2}
\end{figure}

A $(n,2^{nR_1},2^{nR_2},P_e)$ code for a DM-TWRC consists of two
message sets $\mathcal{M}_1=[1:2^{nR_1}]$ and
$\mathcal{M}_2=[1:2^{nR_2}]$, three encoding functions
$f_{1,i},f_{2,i},f_{r,i}$, $i=1, \ldots, n$ and two decoding function
$g_1,g_2$.
\begin{align}
x_{1,i}&=f_{1,i}(M_1,Y_{1,1},\ldots ,Y_{1,i-1}),~~~~i=1, \ldots, n\nonumber\\
x_{2,i}&=f_{2,i}(M_2,Y_{2,1},\ldots ,Y_{2,i-1}),~~~~i=1, \ldots, n\nonumber\\
x_{r,i}&=f_{r,i}(Y_{r,1},\ldots ,Y_{r,i-1}),~~~~i=1, \ldots, n\nonumber\\
g_1&:\mathcal{Y}^n_1 \times \mathcal{M}_1 \rightarrow \mathcal{M}_2,~~~~g_2:\mathcal{Y}^n_2 \times \mathcal{M}_2 \rightarrow \mathcal{M}_1.\nonumber
\end{align}
The average error probability is
$P_e=\textrm{Pr}\{g_1(M_1,Y^n_1)\neq
M_2~\textrm{or}~g_2(M_2,Y^n_2)\neq M_1\}$. A rate pair is said to be
achievable if there exists a $(n,2^{nR_1},2^{nR_2},P_e)$ code such
that $P_e\rightarrow 0$ as $n\rightarrow \infty$. The closure of the
set of all achievable rates $(R_1,R_2)$ is the capacity region of the
two-way relay channel.

\subsection{Gaussian TWRC model}
The full additive white Gaussian noise (AWGN) two-way relay channel
can be modeled as below.
\begin{align}
  Y_1&=X_r+X_2+Z_1 \nonumber \\
  Y_2&=X_r+X_1+Z_2 \nonumber \\
  Y_r&=X_1+X_2+Z_r
  \label{GTWRC}
\end{align}
where the noises are independent: $Z_1\sim\mathcal{N}(0,N_1), Z_2\sim\mathcal{N}(0,N_2),
Z_r\sim\mathcal{N}(0,N_r)$. The average
input power constraints for user 1, user 2 and the relay are
$P_1,P_2,P_r$ respectively.

\section{A Partial Decode-Forward Scheme}\label{sec:partial_df}
In this section, we provide an achievable rate region for the TWRC
with a partial decode-forward scheme. Each user splits its message
into two parts and uses superposition coding to encode them. The
relay only decodes one message part of each user and re-encode the
decoded message pair together and broadcast. It can either re-encode
each message pair separately or divides these message pairs into lists
and only encodes the list index, which is similar to the binning
technique in \cite{xie2007network}. Both strategies achieve the same rate
region. The users then decode the message from each other by joint
typicality decoding of both the current and previous blocks.

\subsection{Achievable rate for the DM-TWRC}
\begin{thm}
\label{thm:par}
The following rate region is achievable for the two-way relay channel:
\begin{align}
  R_1\leq&\;\min \{I(U_1;Y_r|U_2,X_r)+I(X_1;Y_2|U_1,X_2,X_r),\nonumber \\
  &\;~~~~~~I(X_1,X_r;Y_2|X_2)\}\nonumber\\
  R_2\leq&\; \min \{I(U_2;Y_r|U_1,X_r)+I(X_2;Y_1|U_2,X_1,X_r),\nonumber \\
  &\;~~~~~~I(X_2,X_r;Y_1|X_1)\}\nonumber\\
  R_1+R_2\leq&\; I(U_1,U_2;Y_r|X_r)+I(X_1;Y_2|U_1,X_2,X_r)\nonumber \\
  &\;+I(X_2;Y_1|U_2,X_1,X_r)
\end{align}
for some joint distribution $p(u_1,x_1)p(u_2,x_2)p(x_r)$.
\end{thm}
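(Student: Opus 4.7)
The plan is to establish this region by a standard block-Markov random coding argument that combines rate splitting, superposition coding, and backward decoding at the users.

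First I would split each user's message as $M_i=(M_{i0},M_{i1})$ with rates $R_{i0}+R_{i1}=R_i$; the ``common'' part $M_{i0}$ is the one the relay decodes, while the ``private'' part $M_{i1}$ is destined directly for the other user. For the codebook I would generate, for each user $i$, $2^{nR_{i0}}$ cloud centers $u_i^n(m_{i0})$ i.i.d.\ from $p(u_i)$, and for each cloud center $2^{nR_{i1}}$ satellite codewords $x_i^n(m_{i0},m_{i1})$ i.i.d.\ from $p(x_i|u_i)$. Independently, the relay generates $2^{n(R_{10}+R_{20})}$ codewords $x_r^n(m_{10},m_{20})$ i.i.d.\ from $p(x_r)$. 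Encoding proceeds over $B$ blocks: in block $b$, user $i$ transmits $x_i^n(m_{i0,b},m_{i1,b})$ and the relay transmits $x_r^n(\hat{m}_{10,b-1},\hat{m}_{20,b-1})$ built from its estimates of the pair decoded in the previous block.

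For decoding at the relay at the end of block $b$, the relay treats the reception as a two-user MAC with $x_r^n$ known as side information and looks for the unique pair $(m_{10,b},m_{20,b})$ such that $(u_1^n(m_{10,b}),u_2^n(m_{20,b}),x_r^n,y_r^n)$ are jointly typical. Standard joint-typicality error-event analysis then gives $R_{10}\leq I(U_1;Y_r|U_2,X_r)$, $R_{20}\leq I(U_2;Y_r|U_1,X_r)$, and $R_{10}+R_{20}\leq I(U_1,U_2;Y_r|X_r)$. For decoding at the users I would use backward decoding (the analysis at user 2 is representative; user 1 is symmetric). Assuming blocks $b+1,\dots,B$ have already been decoded, user 2 jointly uses its received signals in blocks $b$ and $b+1$ to recover $(m_{10,b},m_{11,b})$: the block-$b$ observation contains $x_1^n(m_{10,b},m_{11,b})$ together with already-known $x_r^n$ and $x_2^n$, while the block-$(b+1)$ observation contains $x_r^n(m_{10,b},m_{20,b})$ with $m_{20,b}$ known. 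Splitting error events by whether $m_{10,b}$ is wrong yields two conditions: when only $m_{11,b}$ is in error, the satellite codeword is conditionally independent of the other signals given $u_1^n$, producing $R_{11}\leq I(X_1;Y_2|U_1,X_2,X_r)$; when $m_{10,b}$ itself is wrong, the relay codeword in block $b+1$ and the entire user-1 codeword in block $b$ both decorrelate from the true ones, producing $R_{10}+R_{11}\leq I(X_r;Y_2|X_2)+I(X_1;Y_2|X_2,X_r)=I(X_1,X_r;Y_2|X_2)$ by the chain rule.

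Finally I would apply Fourier--Motzkin elimination to the rate-split variables $R_{i0},R_{i1}\geq 0$ using $R_i=R_{i0}+R_{i1}$ to collapse the resulting inequalities into the three stated bounds on $R_1$, $R_2$, and $R_1+R_2$. The main conceptual obstacle is setting up the backward-decoding error events correctly so that each block is exploited twice, and verifying that the chain-rule recombination produces exactly the cut-set-like expression $I(X_1,X_r;Y_2|X_2)$; once this joint-decoder analysis is in place, the Fourier--Motzkin reduction and the symmetric side are mechanical.
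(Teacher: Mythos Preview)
Your overall architecture---rate splitting, superposition coding at the users, an independent relay codebook indexed by $(m_{10},m_{20})$, MAC decoding at the relay, joint two-block decoding at each user, and Fourier--Motzkin elimination---matches the paper's proof. The substantive difference is that you use backward decoding whereas the paper uses a forward sliding-window decoder: at the end of block $b$, user~2 decodes $(m_{10,b-1},m_{11,b-1})$ by combining $y_2^n(b-1)$ (which contains $x_1^n(m_{10,b-1},m_{11,b-1})$ together with the already-resolved $x_r^n(m_{10,b-2},m_{20,b-2})$) and $y_2^n(b)$ (which contains $x_r^n(m_{10,b-1},m_{20,b-1})$).

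There is, however, a genuine gap in your backward-decoding step. You assert that in block~$b$ the relay codeword $x_r^n$ is ``already known,'' but in block~$b$ the relay transmits $x_r^n(m_{10,b-1},m_{20,b-1})$, and under backward decoding $m_{10,b-1}$ has \emph{not} yet been recovered when you process block~$b$. Hence you cannot include $X_r$ in the block-$b$ typicality test, and the error event ``only $m_{11,b}$ wrong'' yields at best $R_{11}\le I(X_1;Y_2\mid U_1,X_2)$ rather than the claimed $R_{11}\le I(X_1;Y_2\mid U_1,X_2,X_r)$; the same defect contaminates your bound on $R_{10}+R_{11}$. This is exactly why the paper decodes forward: lagging one block, the relay signal in the earlier of the two observation windows depends on messages from two blocks prior, which have already been decoded. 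Your argument can be repaired either by switching to sliding-window decoding as in the paper, or by augmenting the backward step with a non-unique search over $m_{10,b-1}$ through $x_r^n$ in block~$b$ and analyzing the additional error events this introduces.
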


\begin{rem}
If $U_1=X_1,U_2=X_2$, this region reduces to the
decode-forward lower bound in \cite{xie2007network}. Therefore, the partial DF
scheme contains the DF scheme in \cite{xie2007network} as a special case.
\end{rem}

\begin{figure*}[t]
  \begin{align}
    R_1 &\leq \min \left\{C\left(\frac{\alpha
      P_1}{\bar{\alpha}P_1+\bar{\beta}P_2+N_r}\right)
    +C\left(\frac{\bar{\alpha}P_1}{N_2}\right),
    C\left(\frac{P_1+P_r}{N_2}\right)\right\}\nonumber\\
    R_2 &\leq \min \left\{C\left(\frac{\beta P_2}
    {\bar{\alpha}P_1+\bar{\beta}P_2+N_r}\right)+
    C\left(\frac{\bar{\beta}P_2}{N_1}\right),
    C\left(\frac{P_2+P_r}{N_1}\right)\right\}\nonumber \\
    R_1+R_2 & \leq C\left(\frac{\alpha P_1+\beta P_2}
    {\bar{\alpha}P_1+\bar{\beta}P_2+N_r}\right)+
    C\left(\frac{\bar{\alpha}P_1}{N_2}\right)
    +C\left(\frac{\bar{\beta}P_2}{N_1}\right), \quad \text{where }
    0\leq \alpha, \beta \leq 1.
    \label{gtwrc-pdf}
  \end{align}
\end{figure*}

\begin{proof}
We use a block coding scheme in which each user sends $B-1$ messages
over $B$ blocks of $n$ symbols each.
\subsubsection{Codebook generation}
Fix $p(u_1,x_1)p(u_2,x_2)p(x_r)$.
Split each message into two parts: $m_1 = (m_{10},m_{11})$ with
rate $(R_{10},R_{11})$, and $m_2 = (m_{20},m_{22})$ with
rate $(R_{20},R_{22})$.
\begin{itemize}
\item Generate $2^{nR_{10}}$ i.i.d. sequences $u^n_{1}(m_{10})\sim
  \prod ^n_{i=1}p(u_{1i})$, where $m_{10} \in [1:2^{nR_{10}}]$. For
  each $u^n_{1}(m_{10})$, generate $2^{nR_{11}}$ i.i.d. sequences
  $x^n_{1}(m_{11},m_{10})\sim \prod ^n_{i=1}p(x_{1i}|u_{1i})$, where
  $m_{11} \in [1:2^{nR_{11}}]$.
\item Generate $2^{nR_{20}}$ i.i.d. sequences $u^n_{2}(m_{20})\sim
  \prod ^n_{i=1}p(u_{2i})$, where $m_{20} \in [1:2^{nR_{20}}]$. For
  each $u^n_{2}(m_{20})$, generate $2^{nR_{22}}$ i.i.d. sequences
  $x^n_{2}(m_{22},m_{20})\sim \prod ^n_{i=1}p(x_{2i}|u_{2i})$, where
  $m_{22} \in [1:2^{nR_{22}}]$.
\item Uniformly throw each message pair $(m_{10},m_{20})$ into $2^{nR_r}$
  bins. Let $K(m_{10},m_{20})$ denote the index of bin.

\item Generate $2^{nR_r}$ i.i.d. sequences $x^n_{r}(K)\sim \prod
  ^n_{i=1}p(x_{ri})$, where $K\in [1:2^{nR_r}]$. If $R_r=R_{10}+R_{20}$, there is no need for binning.

\end{itemize}
The codebook is revealed to all parties.

\subsubsection{Encoding}
In each block $b\in [1:B-1]$, user 1 and user 2 transmit
$x^n_{1}(m_{11}(b),m_{10}(b))$ and $x^n_{2}(m_{22}(b),m_{20}(b))$
respectively. In block $B$, user 1 and user 2 transmit $x^n_{1}(1,1)$
and $x^n_{2}(1,1)$, respectively.

At the end of block $b$, the relay has an estimate
$(\tilde{m}_{10}(b),\tilde{m}_{20}(b))$ from the decoding
procedure. It transmits
$x^n_r(K(\tilde{m}_{10}(b),\tilde{m}_{20}(b)))$ in block $b+1$.

\subsubsection{Decoding}
We explain the decoding strategy at the end of block $b$.
\subsubsection*{Decoding at the relay}
Upon receiving $y_r^n(b)$, the relay searches for the unique pair
$(\tilde{m}_{10}(b),\tilde{m}_{20}(b))$ such that
\begin{align*}
  \bigl( & u_1^n(\tilde{m}_{10}(b)),u_2^n(\tilde{m}_{20}(b)),y_r^n(b),\\
  & x^n_r\left(K(\tilde{m}_{10}(b-1),\tilde{m}_{20}(b-1))\right)\bigr)
  \in A^n_{\epsilon}.
\end{align*}
Following the analysis in multiple access channel, the error
probability will go to zero as $n\rightarrow \infty$ if
\begin{align}
  R_{10}&\;\leq I(U_1;Y_r|U_2,X_r)\nonumber \\
  R_{20}&\;\leq I(U_2;Y_r|U_1,X_r)\nonumber \\
  R_{10}+R_{20}&\;\leq I(U_1,U_2;Y_r|X_r).
\label{par1}
\end{align}

\subsubsection*{Decoding at each user}
By block $b$, user 2 has decoded $m_1(b-2)$. At the end of block $b$,
it searches for a unique message pair $(\hat{m}_{10}(b-1),\hat{m}_{11}(b-1))$
such that
\begin{align*}
  \big(x^n_r(K(\hat{m}_{10}(b-1),{m}_{20}(b-1))),y^n_2(b),x^n_2(b)\big)&
  \in A^n_{\epsilon}\\
  \textrm{and}~~~~\big(u_1^n(\hat{m}_{10}(b-1)),x^n_1(\hat{m}_{11}(b-1),
  \hat{m}_{10}(b-1)),&\nonumber\\
  y^n_2(b-1),x^n_r(K(m_1(b-2),{m}_2(b-2))),x^n_2(b-1)\big)&\in A^n_{\epsilon}.
\end{align*}
Following joint decoding analysis, the error probability will go to
zero as $n\rightarrow \infty$ if
\begin{align}
  R_{11}&\;\leq I(X_1;Y_2|U_1,X_2,X_r)\nonumber\\
  R_{10}+R_{11}&\;\leq I(X_r;Y_2|X_2)+I(U_1,X_1;Y_2|X_2,X_r)\nonumber\\
  &\;=I(X_1,X_r;Y_2|X_2).
\end{align}

Similarly, user 1 can decode $({m}_{20}(b-1),{m}_{22}(b-1))$ with
error probability goes to zero as $n\rightarrow \infty$ if
\begin{align}
  R_{22}&\;\leq I(X_2;Y_1|U_2,X_1,X_r)\nonumber \\
  R_{20}+R_{22}&\;\leq I(X_2,X_r;Y_1|X_1).
  \label{par2}
\end{align}

By applying Fourier-Motzkin Elimination to the inequalities in
(\ref{par1})-(\ref{par2}), the achievable rates in terms of
$R_1=R_{10}+R_{11}$ and $R_2=R_{20}+R_{22}$ are as given in Theorem
\ref{thm:par}.
\end{proof}

\subsection{Rate region for the Gaussian TWRC}
Now we apply the proposed partial decode-forward scheme to the AWGN
TWRC in \eqref{GTWRC}. Using jointly Gaussian codewords, we can derive
an achievable rate region as follows.
\begin{cor}
  The rate region in \eqref{gtwrc-pdf} is achievable for the AWGN
  two-way relay channel.
\end{cor}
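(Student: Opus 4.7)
The plan is to apply Theorem~\ref{thm:par} with a jointly Gaussian choice of auxiliaries that respects the required product form $p(u_1,x_1)p(u_2,x_2)p(x_r)$ and meets the power constraints $P_1,P_2,P_r$. Concretely, I would introduce power-split parameters $\alpha,\beta\in[0,1]$ with $\bar\alpha=1-\alpha$, $\bar\beta=1-\beta$, and write $X_1=U_1+V_1$ with independent $U_1\sim\mathcal{N}(0,\alpha P_1)$ and $V_1\sim\mathcal{N}(0,\bar\alpha P_1)$; symmetrically $X_2=U_2+V_2$ with $U_2\sim\mathcal{N}(0,\beta P_2)$ and $V_2\sim\mathcal{N}(0,\bar\beta P_2)$ independent; and $X_r\sim\mathcal{N}(0,P_r)$ drawn independently of everything else. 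The three groups $(U_1,X_1)$, $(U_2,X_2)$, $X_r$ are mutually independent by construction, so the product form required in Theorem~\ref{thm:par} is automatically satisfied, and the marginal powers match the input constraints.

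Next I would evaluate each mutual information in Theorem~\ref{thm:par} under this choice using the elementary identity $I(G;G+N)=\tfrac{1}{2}\log\bigl(1+\tfrac{\mathrm{Var}(G)}{\mathrm{Var}(N)}\bigr)=C(\mathrm{Var}(G)/\mathrm{Var}(N))$ for independent Gaussians. Conditioning $Y_r=X_1+X_2+Z_r$ on $(U_2,X_r)$ isolates $V_1+V_2+Z_r$ as effective noise with variance $\bar\alpha P_1+\bar\beta P_2+N_r$ facing signal $U_1$ of power $\alpha P_1$, giving the first term of the $R_1$ bound. Conditioning $Y_2=X_r+X_1+Z_2$ on $(U_1,X_2,X_r)$ leaves only $V_1+Z_2$, giving $C(\bar\alpha P_1/N_2)$. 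The cut-set-type bound $I(X_1,X_r;Y_2|X_2)$ becomes $C((P_1+P_r)/N_2)$ since $X_1\perp X_r$ and both are independent of $X_2$. Symmetric computations for user~2, together with $I(U_1,U_2;Y_r|X_r)=C((\alpha P_1+\beta P_2)/(\bar\alpha P_1+\bar\beta P_2+N_r))$, produce every term of~\eqref{gtwrc-pdf}.

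The proof is essentially a substitution; there is no real obstacle beyond bookkeeping. The only subtlety worth flagging is that the auxiliary $U_i$ and input $X_i$ are jointly Gaussian and correlated, whereas $X_r$ must be independent of both users' variables because the relay's transmission is generated obliviously of $(U_1,U_2,X_1,X_2)$ in the codebook; the additive decomposition $X_i=U_i+V_i$ with $V_i\perp U_i$ satisfies this and is also the variance-maximizing choice, which is why Gaussian inputs are natural here. Ranging $\alpha,\beta$ over $[0,1]$ then traces out the region in~\eqref{gtwrc-pdf}, concluding the corollary.
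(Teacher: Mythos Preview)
Your proposal is correct and follows exactly the same approach as the paper: apply Theorem~\ref{thm:par} with $X_1=U_1+V_1$, $X_2=U_2+V_2$ for independent Gaussian pieces with variances $(\alpha P_1,\bar\alpha P_1)$ and $(\beta P_2,\bar\beta P_2)$, and evaluate the resulting mutual informations. The paper states this in one sentence without working out the individual terms; your evaluations of each term are all correct and simply fill in that routine bookkeeping.
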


Achievability follows from Theorem \ref{thm:par} by setting
$X_1=U_1+V_1$, where $U_1\sim\mathcal{N}(0,\alpha P_1)$ and
$V_1\sim\mathcal{N}(0,\bar{\alpha} P_1)$ are independent, and by
setting $X_2=U_2+V_2$, where $U_2\sim\mathcal{N}(0,\beta P_2)$ and
$V_2\sim\mathcal{N}(0,\bar{\beta} P_2)$ are independent.

\begin{cor}
Partial decode-forward achieves strictly better region region than the decode-forward scheme in \cite{xie2007network} when
the following condition holds:
\begin{align}
N_r&>\min\{N_1,N_2\}\nonumber\\
\textrm{or}~~~C(P_1/N_2)+C(P_2/N_1)&>C((P_1+P_2)/N_r).
\end{align}
\end{cor}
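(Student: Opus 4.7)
The plan is to exhibit, for each listed condition, a specific rate pair that lies in the partial-DF region \eqref{gtwrc-pdf} but violates a constraint of the DF region of \cite{xie2007network}. Since the DF region is recovered from \eqref{gtwrc-pdf} by setting $\alpha=\beta=1$ (equivalently $\bar\alpha=\bar\beta=0$), strict containment follows whenever a different choice of $(\alpha,\beta)$ yields a witness rate pair outside the DF region.

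For the first condition I will assume without loss of generality that $N_r>N_2$ (the case $N_r>N_1$ being symmetric). I will freeze $\beta=1$ and view the first argument of the $R_1$ minimum in \eqref{gtwrc-pdf} as a one-variable function
\[
  G_1(\bar\alpha)=C\!\left(\frac{(1-\bar\alpha) P_1}{\bar\alpha P_1+N_r}\right)+C\!\left(\frac{\bar\alpha P_1}{N_2}\right).
\]
A short differentiation will show that $G_1'(0)$ has the sign of $N_r-N_2$, so $G_1$ is strictly increasing at $\bar\alpha=0$ under our hypothesis. I will then note that $G_1(0)=C(P_1/N_r)$ is strictly less than the other branch $C((P_1+P_r)/N_2)$ whenever $N_r>N_2$ (using $P_r\geq 0$), so by continuity $G_1$ remains the active branch of the $R_1$ minimum for all sufficiently small $\bar\alpha>0$. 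The rate pair $(R_1,R_2)=(G_1(\bar\alpha),0)$ then sits in the partial-DF region with parameters $(\alpha,\beta)=(1-\bar\alpha,1)$ but strictly violates the DF individual bound $R_1\leq C(P_1/N_r)$.

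For the second condition I will take the opposite extreme $\alpha=\beta=0$, which corresponds to bypassing the relay entirely. Substituting $\bar\alpha=\bar\beta=1$ in \eqref{gtwrc-pdf} collapses the three bounds to $R_1\leq C(P_1/N_2)$, $R_2\leq C(P_2/N_1)$, and a sum-rate bound equal to the sum of these two (hence redundant). Thus the corner $(C(P_1/N_2),C(P_2/N_1))$ is achievable under partial DF; by hypothesis its sum strictly exceeds $C((P_1+P_2)/N_r)$, which is the DF sum-rate ceiling, so the corner lies outside the DF region.

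All the computations are elementary. The only mildly delicate point is verifying in the first case that the active branch of the $R_1$ minimum does not switch under a small perturbation of $\alpha$; this will follow immediately from continuity once the strict inequality $C(P_1/N_r)<C((P_1+P_r)/N_2)$ is observed. I do not anticipate any serious obstacle.
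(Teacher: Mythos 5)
Your argument is correct, and it is worth noting that the paper itself states this corollary \emph{without} proof --- it only remarks afterwards that the improvement ``can come from time sharing of decode-forward and direct transmission'' and defers genuinely new points to numerical examples. Your second witness ($\alpha=\beta=0$, giving the corner $(C(P_1/N_2),C(P_2/N_1))$ that violates the DF sum-rate bound $C((P_1+P_2)/N_r)$) is exactly the ``direct transmission'' point the authors allude to, and your first witness makes rigorous the claim for the condition $N_r>\min\{N_1,N_2\}$: with $\beta=1$ the function $G_1(\bar\alpha)=\frac{1}{2}\log\bigl(\frac{(P_1+N_r)(N_2+\bar\alpha P_1)}{N_2(\bar\alpha P_1+N_r)}\bigr)$ indeed has derivative of the sign of $N_r-N_2$ at $\bar\alpha=0$ (in fact it is strictly increasing on all of $[0,1]$ when $N_r>N_2$), and $C(P_1/N_r)<C((P_1+P_r)/N_2)$ identifies the relay link as the active DF bottleneck. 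Two small points you glossed over, neither of which is a gap: (i) for the witness $(G_1(\bar\alpha),0)$ you should also confirm the sum-rate constraint of \eqref{gtwrc-pdf}, but with $\bar\beta=0$ that constraint evaluated at $R_2=0$ dominates $G_1(\bar\alpha)$ because its first term is $C\bigl(\frac{\alpha P_1+P_2}{\bar\alpha P_1+N_r}\bigr)\ge C\bigl(\frac{\alpha P_1}{\bar\alpha P_1+N_r}\bigr)$; (ii) the ``delicate'' branch-switching check is actually unnecessary, since both branches of the partial-DF minimum for $R_1$ strictly exceed $C(P_1/N_r)$, so the minimum does too regardless of which branch is active. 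Since containment of the DF region in the partial-DF region is automatic (DF is the slice $\alpha=\beta=1$), either witness yields strict inclusion, and your proof stands as a complete justification of a claim the paper leaves unproved.
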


The larger rate region of partial decode-forward can come from time sharing of decode-forward and direct transmission (without using the relay). But for asymmetric channels, new rates outside this time-shared region are also achievable as shown in the numerical results section.

\section{A Combined Decode-Forward and Compute-Forward Scheme for
  the Gaussian TWRC}\label{sec:comb}
In this section, we propose a combined decode-forward and
compute-forward scheme and analyze its rate regions for the Gaussian
TWRC. Each user split its message into two parts. One part is encoded
by a random Gaussian code, while another part is encoded by a lattice
code. The user transmits a superposition codeword of these two
parts. The relay decodes the Gaussian codewords of both users and a
function (the sum) of the two lattice codewords. It then jointly
encodes all 3 decoded parts and forwards. Again the relay can assign a
separate codeword to each set of the 3 decoded parts or it can encode only
the list index as in \cite{xie2007network} without affecting the achievable
rate. The users apply both joint typicality decoding and list
lattice decoding \cite{song2010list} to decode the message from each other.
The combined scheme achieves genuinely new rate. An example will be given in the numerical
result section.

\begin{thm}
\label{thm2}
The following rate region is achievable for the AWGN two-way relay channel:
\begin{align}
\label{combined_rate}
  R_{10}&\;\leq C\left(\frac{\alpha P_1}
  {\bar{\alpha} P_1+\bar{\beta} P_2+N_r}\right)=I_1\nonumber \\
  R_{20}&\;\leq C\left(\frac{\beta P_2}
  {\bar{\alpha} P_1+\bar{\beta} P_2+N_r}\right)=I_2\nonumber\\
  R_{10}+R_{20}&\;\leq C\left(\frac{\alpha P_1+\beta P_2}
  {\bar{\alpha} P_1+\bar{\beta} P_2+N_r}\right)=I_3\nonumber\\
  R_{11}&\;< \frac{1}{2}\log \left(\frac{\bar{\alpha} P_1}
  {\bar{\alpha} P_1+\bar{\beta} P_2}+\frac{\bar{\alpha} P_1}{N_r}\right)^+=I_4\nonumber\\
  R_{22}&\;< \frac{1}{2}\log \left(\frac{\bar{\beta} P_2}
  {\bar{\alpha} P_1+\bar{\beta} P_2}+\frac{\bar{\beta} P_2}{N_r}\right)^+=I_5\nonumber\\
  R_{10}&\;\leq C\left(\frac{\alpha P_1+\gamma P_r}
  {\bar{\alpha} P_1+\bar{\gamma} P_r+N_2}\right)=I_6\nonumber\\
  R_{20}&\;\leq C\left(\frac{\beta P_2+\gamma P_r}
  {\bar{\beta} P_2+\bar{\gamma} P_r+N_1}\right)=I_7\nonumber\\
  R_{11}&\;\leq C\left(\frac{\bar{\gamma} P_r}{P_1+N_2}\right)
  +C\left(\frac{\bar{\alpha} P_1}{N_2}\right)=I_8\nonumber\\
  R_{22}&\;\leq C\left(\frac{\bar{\gamma} P_r}{P_2+N_1}\right)
  +C\left(\frac{\bar{\beta} P_2}{N_1}\right)=I_9
\end{align}
where $0\leq {\alpha}, {\beta}, {\gamma} \leq 1$ and $[x]^+\triangleq \max\{x,0\}$. By applying
Fourier-Motzkin Elimination to the above inequalities, the achievable
rates in terms of $R_1=R_{10}+R_{11}$ and $R_2=R_{20}+R_{22}$ can be
expressed as
\begin{align}
  R_1&\;\leq \min (I_1,I_6)+\min (I_4,I_8)\nonumber\\
  R_2&\;\leq \min (I_2,I_7)+\min (I_5,I_9)\nonumber\\
  R_1+R_2&\;\leq I_3+\min (I_4,I_8)+\min (I_5,I_9).
\end{align}
\end{thm}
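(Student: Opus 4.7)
The plan is to adapt the partial decode-forward construction of Theorem \ref{thm:par} by replacing the inner superposition layer of each user with a nested lattice codeword, so that the relay simultaneously decodes the two Gaussian outer layers in a standard MAC sense and the modulo-lattice sum of the two inner layers via compute-forward. I would fix power splits $\alpha, \beta, \gamma \in [0,1]$ and set $X_1 = U_1 + V_1$ with $U_1 \sim \mathcal{N}(0,\alpha P_1)$ and $V_1$ a nested lattice codeword of second moment $\bar{\alpha} P_1$, and analogously $X_2 = U_2 + V_2$. The relay transmits $X_r$ as a Gaussian/lattice superposition with powers $\gamma P_r$ and $\bar{\gamma} P_r$, designed to jointly carry the decoded triple $(m_{10}, m_{20}, s)$ where $s = V_1 \oplus V_2$. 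Block-Markov coding is used so that in block $b+1$ the relay forwards the decoded triple from block $b$.

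I would then analyze decoding in two halves at the relay. Treating the lattice layers as additional Gaussian-equivalent noise of variance $\bar{\alpha} P_1 + \bar{\beta} P_2$, joint typicality decoding of $(U_1, U_2)$ on $Y_r$ produces the MAC inequalities $I_1, I_2, I_3$; after subtracting the decoded Gaussian parts, the residual channel $V_1 + V_2 + Z_r$ is decoded using the asymmetric-power nested-lattice computation rate of Nam--Chung--Lee applied with MMSE scaling, giving the $[\,\cdot\,]^+$ expressions $I_4, I_5$ for $R_{11}$ and $R_{22}$. At each destination, I would decode the Gaussian part ($m_{10}$ at user 2, $m_{20}$ at user 1) by joint typicality across two blocks, combining the direct link in block $b$ with the relay broadcast in block $b+1$ while treating all lattice layers as Gaussian noise of total variance $\bar{\alpha} P_1 + \bar{\gamma} P_r + N_2$; this yields $I_6$ and $I_7$ as the sum-rate MAC bounds for the jointly-indexed pair $(U_1, U_r)$.

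The delicate part is decoding the lattice message at each destination, which I would handle by the list decoding argument of \cite{song2010list}. The lattice layer of $X_r$ in block $b+1$ conveys the sum $s$, from which user 2 (knowing its own $V_2$) obtains a list of candidates for $V_1$ at rate $C(\bar{\gamma} P_r/(P_1+N_2))$, where $X_1(b+1)$ and the relay's Gaussian layer act as residual interference and noise. The direct link in block $b$, once the outer Gaussian part $U_1$ has been decoded and the relay signal $X_r(b)$ is known from previous-block decoding, provides an additional $C(\bar{\alpha} P_1/N_2)$ of resolution against the remaining noise, and the two combine additively to give $I_8$; the symmetric bound at user 1 yields $I_9$.

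Finally I would apply Fourier--Motzkin Elimination to the nine inequalities in $(R_{10}, R_{11}, R_{20}, R_{22})$ subject to $R_1 = R_{10}+R_{11}$ and $R_2 = R_{20}+R_{22}$ to collapse them into the three announced bounds on $R_1$, $R_2$, and $R_1+R_2$. I expect the main obstacle to be a rigorous coupling of lattice list decoding with the Gaussian random-coding list resolution: one must verify that the lattice list obtained from the modulo-sum broadcast and the Gaussian-typicality list obtained from the direct link intersect in a unique codeword with probability tending to one, which requires a careful handling of the lattice error event and cannot be inherited directly from the jointly typical framework used for the Gaussian layers.
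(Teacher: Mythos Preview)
Your proposal is essentially the paper's scheme: Gaussian outer plus nested-lattice inner layers at each user, MAC decoding followed by compute-forward at the relay, and two-block joint decoding at each destination that combines a lattice list from the direct link with typicality resolution via the relay's broadcast (you swap the list/resolution roles, which is harmless and yields the same $I_8,I_9$). One slip to fix: when decoding $m_{11}$ at user~2 from block $b{+}1$, the relay's Gaussian layer $u_r(b{+}1)$ is \emph{not} residual interference---it is indexed by $(m_{10}(b),m_{20}(b))$, which user~2 already knows after the Gaussian-part decoding step, so it must be subtracted to obtain $C(\bar\gamma P_r/(P_1+N_2))$ rather than the weaker $C(\bar\gamma P_r/(\gamma P_r+P_1+N_2))$; note also that the paper encodes \emph{both} relay layers $u_r,v_r$ with random Gaussian codebooks (indexed by bin indices of $(m_{10},m_{20})$ and of $T^n$), not a Gaussian/lattice pair.
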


\begin{proof}
We use block coding scheme in which each user sends $B-1$ messages
over $B$ blocks of $n$ symbols.
\subsubsection{Codebook generation}
Let $P_1=\alpha P_1+\bar{\alpha} P_1$ and $P_2=\beta P_2+\bar{\beta}
P_2$. Without loss of generality, assume $\bar{\alpha} P_1\geq \bar{\beta} P_2$,
construct a chain of nested lattices
$\Lambda_1\subseteq\Lambda_2\subseteq\Lambda_{c1}\subseteq\Lambda_{c2}$,
where $\sigma^2(\Lambda_1)=\bar{\alpha} P_1$ and $\sigma^2(\Lambda_2)=\bar{\beta}
P_2$. $\Lambda_1$ and $\Lambda_2$ are Rogers-good and Poltyrev-good,
while $\Lambda_{c1}$ and $\Lambda_{c2}$ are Poltyrev-good
\cite{erez2004achieving,erez2005lattices}.

Split each message into two parts: $m_1 = (m_{10},m_{11})$
with rate $(R_{10},R_{11})$, and $m_2 = (m_{20},m_{22})$ with
rate $(R_{20},R_{22})$.


\begin{itemize}
\item Generate $2^{nR_{10}}$ random Gaussian codewords
  $u^n_1(m_{10})$ with power
  constraint $\alpha P_1$. Associate each message
  $m_{11}\in[1:2^{nR_{11}}]$ with lattice codeword $t^n_1\in
  \mathcal{C}_1=\Lambda_{c1}\cap\mathcal{V}_1$. Let
  $v^n_1(m_{11})=(t^n_1(m_{11})+U^n_1(m_{11}))\mod \Lambda_1$, where
  $U^n_1$ is the uniformly generated dither sequence known to all
  users and the relay. The codeword for $m_1$ is a superposition of
  the random Gaussian code and the lattice code:
  $x^n_1(m_1)=u^n_1(m_{10})+v^n_1(m_{11})$.
\item Similarly generate
  $2^{nR_{20}}$ random Gaussian codewords $u^n_2(m_{20})$ with power
  constraint $\beta P_2$, and $2^{nR_{22}}$ lattice codewords  $t^n_2(m_{22})$.
Let $x^n_2(m_2)=u^n_2(m_{20})+v^n_2(m_{22})$.
\item Uniformly throw each pair $(m_{10},m_{20})$ into $2^{nR_{r0}}$
  bins. Let $K((m_{10},m_{20}))$ denotes the bin index.
\item Form the computed codewords
  $T^n=(t^n_1(m_{11})+t^n_2(m_{22})-Q_2(t^n_2(m_{22})+U^n_2(m_{22})))\mod
  \Lambda_1$, where $Q_2(t^n)$ is
  the lattice quantizer mapping $t^n$ to the nearest lattice
  point. Uniformly throw $T^n$ into $2^{nR_{r1}}$ bins. Let
  $S(T^n)$ denotes the bin index.
\item Generate $2^{nR_{r0}}$ Gaussian codewords $u^n_r(K)$ with
  power constraint $\gamma P_r$ and $2^{nR_{r1}}$
  Gaussian codewords $v^n_r(S)$ with power constraint
  ${\bar{\gamma}}P_r$. Let $x^n_r=u^n_r(K)+v^n_r(S)$.
\end{itemize}
The codebook is revealed to all nodes.

\subsubsection{Encoding}
In block $b$, user 1 sends $x^n_1(m_1(b))$ and user 2 sends
$x^n_2(m_2(b))$. Assume the relay has decoded
$(m_{10}(b-1),m_{20}(b-1))$ and $T^n(b-1)$ in block $b-1$. It then
sends
$x^n_r(b)=u^n_r(K(m_{10}(b-1),m_{20}(b-1)))+v^n_r(S(T^n(b-1)))$ in
block $b$.

\subsubsection{Decoding}
We explain the decoding strategy at the end of block $b$.
\subsubsection*{Decoding at the relay}
The relay first decodes $m_{10}(b)$ and $m_{20}(b)$ using joint
typicality decoding. Similar to the analysis in multiple access
channel, $P_e\to 0$ as $n\to\infty$ if
\begin{align}
  R_{10}\leq I(U_1;Y_r|U_2,X_r)\nonumber \\
  R_{20}\leq I(U_2;Y_r|U_1,X_r)\nonumber \\
  R_{10}+R_{20}\leq I(U_1,U_2;Y_r|X_r).
  \label{com1}
\end{align}\par
The relay then subtracts $u^n_1(m_{10}(b))$ and
$u^n_2(m_{20}(b))$ from its received signal. Following arguments
similar to those in \cite{nazer2009compute} \cite{nam2009capacity}, it
can then decode $T^n(b)$ with vanishing error as long as
\begin{align}
R_{11}&\;\leq \frac{1}{2}\log \left(\frac{\bar{\alpha} P_1}{\bar{\alpha} P_1+\bar{\beta}
  P_2}+\frac{\bar{\alpha} P_1}{N_r}\right)\nonumber \\
R_{22}&\;\leq \frac{1}{2}\log \left(\frac{\bar{\beta} P_2}{\bar{\alpha} P_1+\bar{\beta}
  P_2}+\frac{\bar{\beta} P_2}{N_r}\right).
\end{align}

\subsubsection*{Decoding at each user}
At the end of block $b$, user 2 first decodes the unique $m_{10}(b-1)$
such that
\begin{align*}
\big(u^n_r(K(m_{10}(b-1),m_{20}(b-1))),y^n_2(b),x^n_2(b)\big)&\in A^n_\epsilon\\
\textrm{and}~~\big(u^n_1(m_{10}(b-1)),u^n_r(K(m_{10}(b-2),m_{20}(b-2))),&\\
x^n_2(b-1),y^n_2(b-1)\big)& \in A^n_\epsilon.
\end{align*}
This decoding has vanishing error probability if
\begin{align}
  R_{10}&\;\leq I(U_r;Y_2|X_2)+I(U_1;Y_2|U_r,X_2)\nonumber \\
  &\;=I(U_1,U_r;Y_2|X_2).
\end{align}

User 2 then subtracts $u^n_1(m_{10}(b-1))$ from $y^n_2(b-1)$ and
uses a lattice list decoder \cite{song2010list} to decode a list of possible
$m_{11}(b-1)$ of size $2^{n(R_{11}-C\left(\bar{\alpha} P_1/N_2)\right)}$, denoted
as $L(m_{11}(b-1))$. To decode which message in this list was sent, it uses the received signal in block $b$. That is to say, it
decodes the unique $m_{11}(b-1)$ such that
\begin{align*}
  \big(y_2^n(b),x_2^n(b), u^n_r(K(m_{10}(b-1),m_{20}(b-1))), &\\
  x^n_r(K(m_{10}(b-1),m_{20}(b-1)),S(T^n(b-1)))\big) &\in A^n_\epsilon\\
  \textrm{and}~~~~~~~~~~~~~~~~~~~~~~m_{11}(b-1) \in L(m_{11}(b&-1)).
\end{align*}
This decoding has vanishing error probability if
\begin{equation}
R_{11}\leq I(X_r;Y_2|X_2,U_r)+C\left(\bar{\alpha} P_1/N_2\right).
\end{equation}

Similarly, user 1 can decode $m_{20}(b-1),m_{22}(b-1)$ with vanishing
error as long as
\begin{align}
  R_{20}&\;\leq I(U_2,U_r;Y_1|X_1)\nonumber \\
  R_{22}&\;\leq I(X_r;Y_1|X_1,U_r)+C\left(\bar{\beta} P_2/N_1\right).
  \label{com2}
\end{align}

Finally, by setting
\begin{align*}
X_1&\;=U_1+V_1;~~~U_1\sim\mathcal{N}(0,\alpha P_1),\;
V_1\sim\mathcal{N}(0,\bar{\alpha} P_1)\\
X_2&\;=U_2+V_2;~~~U_2\sim\mathcal{N}(0,\beta P_2),\;
V_2\sim\mathcal{N}(0,\bar{\beta} P_2)\\
X_r&\;=U_r+V_r;~~~U_r\sim\mathcal{N}(0,\gamma P_r),\;
V_r\sim\mathcal{N}(0,\bar{\gamma} P_r)
\end{align*}
the achievable rate region in Theorem \ref{thm2} can be derived from
inequalities (\ref{com1})-(\ref{com2}).
\end{proof}

\section{Numerical Results}\label{Res}

In this section, we compare the achievable rate regions of the two
proposed schemes with pure decode-forward (DF) \cite{xie2007network} and pure
compute-forward \cite{song2010list}.\par
Figures \ref{s1} and \ref{s2} show
the achievable rate regions of pure DF \cite{xie2007network}, of direct transmission (without using the relay) and of the proposed
partial DF for 2 different channel configurations. Figure \ref{s1} shows that partial DF can achieve new rates outside the time sharing region of pure DF and direct transmission. For example, by setting $\alpha=1,\beta=0.5$ in (\ref{gtwrc-pdf}), partial DF can achieve the rate $(R_1,R_2)=(0.58,1.47)$ which is outside the convex hull of direct transmission and pure DF. This is notably different from
the one-way relay Gaussian channel in which partial DF brings no
improvement. In Figure \ref{s1}, the
channel from the users to the relay is stronger than the channel
between two users, thus the relay chooses partial DF to obtain a better
rate region than DF. Figure \ref{s2} shows performance for another channel configuration which is symmetric. In this case, the channels from two users to the relay are significantly stronger than the direct channel, and the relay will fully decode the messages.

Figures \ref{s3} and \ref{s4} present the achievable rate
regions for DF, compute-forward and the combined scheme. The cut-set outer bound is obtained
by assuming correlated channel inputs and independent channel noise, which is different from the cut-set bound in \cite{song2010list} for physically degraded
channels. Both Figures \ref{s3} and \ref{s4} show that the combined
scheme can achieve a better rate region than either DF and
compute-forward alone. Figure \ref{s3} for an asymmetric channel shows new rates outside the time-shared region of DF and compute-forward. For example, by setting $\alpha=0.5,\beta=0$ in (\ref{combined_rate}), the combined scheme can achieve the rate $(R_1,R_2)=(0.678,0.859)$ which is outside the convex hull of pure DF and pure compute-forward. In Figure \ref{s4} for a symmetric channel, the combined scheme can also achieve a boundary point of
$(R_1,R_2)=(0.69,1.01)$ by setting $\alpha=0.48,\beta=0$ in (\ref{combined_rate}), instead of time sharing of the two independent
schemes. These simulation results show that both proposed schemes achieve strictly new rates particularly for asymmetric channels. But because of space limitation, analyses of these new rates are left for future work.

\begin{figure}[t]
\centering
  \includegraphics[scale=0.5]{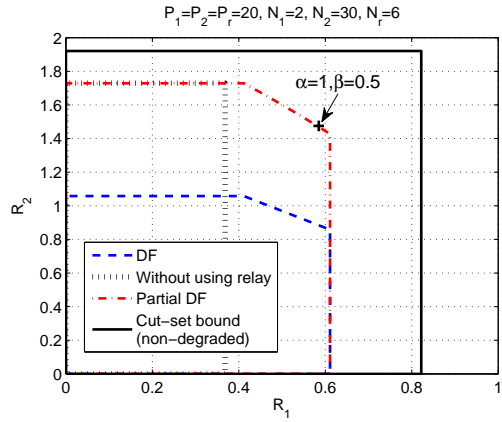}\\
  \caption{Achievable rate regions for DF and partial DF schemes with
    $P_1=P_2=P_r=20,N_1=2,N_2=30,N_r=6$}\label{s1}
\end{figure}

\begin{figure}[t]
\centering
  \includegraphics[scale=0.5]{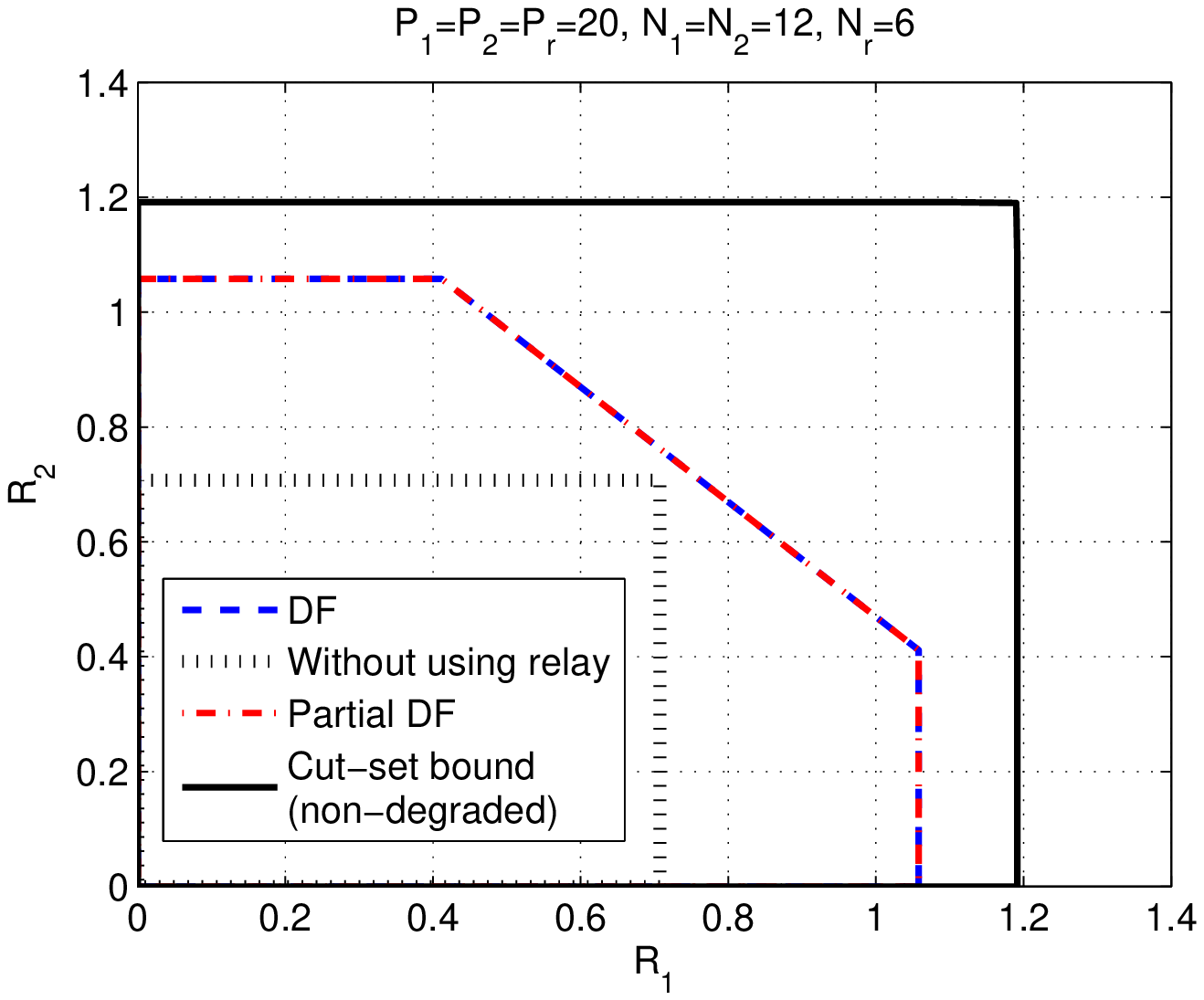}\\
  \caption{Achievable rate regions for DF and partial DF schemes with
    $P_1=P_2=P_r=20,N_1=N_2=12,N_r=6$}\label{s2}
\end{figure}

\begin{figure}[t]
\centering
  \includegraphics[scale=0.5]{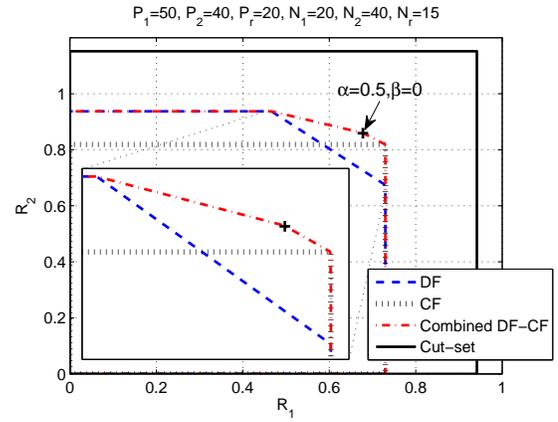}\\
  \caption{Achievable rate regions for DF, compute-forward, combined
    of DF and compute-forward, and partial DF schemes with
    $P_1=50, P_2=40, P_r=20,N_1=20, N_2=40 ,N_r=15$}\label{s3}
\end{figure}

\begin{figure}[t]
\centering
  \includegraphics[scale=0.5]{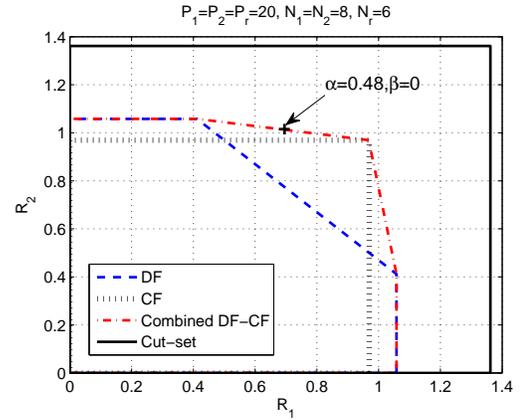}\\
  \caption{Achievable rate regions for DF, compute-forward, combined
    of DF and compute-forward, and partial DF schemes with
    $P_1=P_2=P_r=20,N_1=N_2=8,N_r=6$}\label{s4}
\end{figure}

\section{Conclusion}\label{Con}
We have proposed two new coding schemes for the two-way relay channel:
a partial decode-forward scheme and a combined decode-forward and
compute-forward scheme. Analysis for the Gaussian channel shows that
partial decode-forward can strictly increase the rate region of the
TWRC over pure decode-forward. This result is opposite to the one-way Gaussian
relay channel. In addition, combining decode-forward with
compute-forward by rate splitting and superposition of both Gaussian
and lattice codes can strictly outperform each separate scheme. These results
suggest more comprehensive coding schemes possible for the TWRC.

\bibliographystyle{IEEEtran}
\bibliography{reflist}

\end{document}